\newtheorem{lemma}{Lemma}{}
  \newtheorem{thm}{Theorem}
  \newtheorem{definition}[thm]{Definition}
\def\R{\mathbb{R}}
\def\P{\mathbb{P}} % for probabiltiy
\def\ie{{\em i.e.}}
\def\i{\mathbf{1}} % Indicator random variable
\def\N{\mathcal{N}}
\def\M{\mathcal{M}}
\def\S{\mathcal{S}}
\def\D{\mathcal{D}}
\begin{document}
%
% paper title
% can use linebreaks \\ within to get better formatting as desired
% Do not put math or special symbols in the title.
\title{Spatial CSMA: A Distributed Scheduling Algorithm for the SIR Model with Time-varying Channels}

\author{\IEEEauthorblockN{Peruru Subrahmanya Swamy, Radha Krishna Ganti, Krishna Jagannathan}
\IEEEauthorblockA{Department of Electrical Engineering,
IIT Madras,
Chennai, India 600036\\
\{p.swamy, rganti, krishnaj\}@ee.iitm.ac.in}
}

% author names and affiliations
% use a multiple column layout for up to three different
% affiliations

% conference papers do not typically use \thanks and this command
% is locked out in conference mode. If really needed, such as for
% the acknowledgment of grants, issue a \IEEEoverridecommandlockouts
% after \documentclass

% for over three affiliations, or if they all won't fit within the width
% of the page, use this alternative format:
% 
%\author{\IEEEauthorblockN{Michael Shell\IEEEauthorrefmark{1},
%Homer Simpson\IEEEauthorrefmark{2},
%James Kirk\IEEEauthorrefmark{3}, 
%Montgomery Scott\IEEEauthorrefmark{3} and
%Eldon Tyrell\IEEEauthorrefmark{4}}
%\IEEEauthorblockA{\IEEEauthorrefmark{1}School of Electrical and Computer Engineering\\
%Georgia Institute of Technology,
%Atlanta, Georgia 30332--0250\\ Email: see http://www.michaelshell.org/contact.html}
%\IEEEauthorblockA{\IEEEauthorrefmark{2}Twentieth Century Fox, Springfield, USA\\
%Email: homer@thesimpsons.com}
%\IEEEauthorblockA{\IEEEauthorrefmark{3}Starfleet Academy, San Francisco, California 96678-2391\\
%Telephone: (800) 555--1212, Fax: (888) 555--1212}
%\IEEEauthorblockA{\IEEEauthorrefmark{4}Tyrell Inc., 123 Replicant Street, Los Angeles, California 90210--4321}}

% use for special paper notices
%\IEEEspecialpapernotice{(Invited Paper)}

\IEEEoverridecommandlockouts
\IEEEpubid{\makebox[\columnwidth]{This work has been presented at NCC-2015, held at Mumbai, India.}
\hspace{\columnsep}\makebox[\columnwidth]{ }} 
% make the title area
\maketitle
\begin{abstract}
Recent work has shown that adaptive CSMA algorithms can achieve throughput optimality. However, these adaptive CSMA algorithms assume a rather simplistic model for the wireless medium. Specifically, the interference is typically modelled by a conflict graph, and the channels are assumed to be static. In this work, we propose a distributed and adaptive CSMA algorithm under a more realistic signal-to-interference ratio (SIR) based interference model, with time-varying channels. We prove that our algorithm is throughput optimal under this generalized model. Further, we augment our proposed algorithm by using a parallel update technique. Numerical results show that our algorithm outperforms the conflict graph based algorithms, in terms of supportable throughput and the rate of convergence to steady-state.
\end{abstract}

% As a general rule, do not put math, special symbols or citations
% in the abstract

% no keywords

% For peer review papers, you can put extra information on the cover
% page as needed:
% \ifCLASSOPTIONpeerreview
% \begin{center} \bfseries EDICS Category: 3-BBND \end{center}
% \fi
%
% For peerreview papers, this IEEEtran command inserts a page break and
% creates the second title. It will be ignored for other modes.

%\IEEEpeerreviewmaketitle

\section{Introduction}
A central problem in wireless networks is the design of efficient link scheduling algorithms in the presence of interference. In the design of scheduling algorithms, there are three key performance metrics of interest. The first among them is the achievable \emph{throughput region}. The throughput performance of a scheduling algorithm is characterized by the largest set of arrival rates under which the algorithm can stabilize the queues in the network. Secondly, the \emph{average delay} incurred by the packets in the queue should be small. The third metric of interest is the computational and communication \emph{complexity} involved in implementing the algorithm. Scheduling algorithms with low computational complexity and low communication overheads are preferable.
 
 \subsection{Related Work}
A large part of the existing literature on scheduling is based on the maximum weight scheduling algorithm \cite{tassiulas}, which is known to be throughput optimal under fairly general conditions. However, maximum weight scheduling generally requires solving an NP-hard problem during each scheduling instant, and is difficult to implement in practice. Further, it is not directly amenable to a distributed implementation. Several low complexity alternatives \cite{low_complex_alt1} have been proposed but they achieve only a fraction of the capacity region, and are hence not throughput optimal.
	
	On the other hand, there are simple random access techniques such as Aloha, CSMA (Carrier Sense Multiple Access) which can be implemented in a distributed manner. A distributed algorithm was developed in \cite{libin} to adaptively choose the CSMA parameters so as to achieve throughput optimality. Central to this algorithm is the so called \emph{Glauber dynamics}, which is a Monte Carlo Markov Chain sampling technique \cite{mixing_book}, \cite{bremaud}. Specifically, it is a Gibbs sampler \cite{bremaud} based algorithm.
	
	A main shortcoming of the existing papers on adaptive CSMA is that the results are derived based on rather simplistic models for the wireless channels and the interference. Typical modelling assumptions used include:
	\begin{itemize}
	\item[a.] \emph{Conflict graph interference model}: The interference is modelled by a conflict graph or protocol model \cite{libin}, where the transmissions from two links fail, if the links share an edge in the conflict graph. In reality however, the success or failure of a link depends on the aggregate interference from all the active links in the interference range. In other words, the complex nature of wireless interference is not adequately captured by a conflict graph. On the other hand, the SIR-based interference model can be used to overcome this limitation.
	\item[b.] \emph{Channel model}:  It is assumed that the wireless channel is either to be static (\ie,  not time-varying), or that the instantaneous CSI (Channel state information) at each time slot is available for scheduling collision free transmissions. However, wireless channels are seldom static due to fading, and the availability of CSI at each transmitter is not necessarily realistic in an adhoc setting.
	\end{itemize}	
	We present a brief summary of the assumptions made in the existing literature in the following table:
	\begin{center}
  \begin{tabular}{|c | c | c| c |c|}
  \hline
  Ref. & Interference & Channel  & CSI & Throughput\\  
     &model& model  &  & Optimality\\ 
    \hline
\cite{libin}, \cite{qcsma} & graph & static & - & \checkmark\\   
\cite{time_varying} & graph & varying & Inst. & \checkmark \\
\cite{greedy} & graph & varying & Inst. & \\
\cite{sinr_mimo} & SIR & varying & Inst.& \\
\cite{baccelli}& SIR & varying & Stat. & \\
This work& SIR & varying & Stat. & \checkmark\\
    \hline
    \end{tabular}
\end{center}

	\begin{itemize}
	\item \emph{Inst.} - Instantaneous channel gains are assumed to be known at each time slot.
	\item \emph{Stat.} - Channel statistics (such as average channel gains or  distribution) are assumed to be known.
	\end{itemize}
	
	A time-varying channel is considered between the transmitter of a link and its corresponding receiver in \cite{time_varying}. However, the channel gains between the interfering links are assumed to be static. In \cite{greedy}, time varying channels are considered among all the links, and the interference is modelled by a conflict graph. However, the algorithm \cite{greedy} can support only a fraction of the achievable rate region. A SIR model is considered in \cite{sinr_mimo} to a propose conservative algorithm that is suboptimal. An adaptive Aloha based algorithm is proposed in \cite{baccelli} under time-varying channels. However the algorithm can only maximize some utility functions and is not throughput optimal.
	\subsection{Our Contributions}
	 In this work, we consider a single-hop wireless network and propose a distributed scheduling algorithm.
	 \begin{itemize}
	 \item 	We consider time-varying channels among all the links in the network. Further, the interference is modelled using the SIR model which is more realistic.
	 \item  A key contribution of this paper is in the design of a Gibbs sampler \cite{bremaud} based throughput optimal scheduling algorithm (\emph{Algorithm 1}). In the algorithm we propose, each link only requires the average channel gains from its neighbouring links (defined later). In particular, instantaneous channel gains are not required, which makes our algorithm practical in a fast fading scenario, where the channel gains vary rapidly within a data slot.
     	  
	 \item We augment \emph{Algorithm 1}, which allows only single link updates, and propose \emph{Algorithm 2}, which performs parallel link updates and converges faster.
\end{itemize}	

The remainder of the paper is organised as follows. In Section \ref{network_model}, the network model is described. In Section \ref{rangeR}, the spatial CSMA algorithm is presented and its throughput optimality is proved. Numerical results are presented in Section \ref{simulations}, and we conclude in Section V.
\section{Network Model} \label{network_model}
We consider a single-hop ad-hoc wireless network, and model the links using  a  bipole model introduced in \cite{baccelli}. In a bipole model, each transmitter is associated with a receiver that is at a distance $R$ in some arbitrary direction.   A transmitter and its corresponding receiver is referred to as a link. We assume that there are $N$ links in the network. We use the set  $\mathcal{N}$ to denote all the links in the network.   We assume that the time is slotted. 

We assume that the link distance $R$ is much smaller than the distances of the transmitter and receiver to the other links.  With this assumption, we can think of links as points in the Euclidean space (The results in this paper are not limited by this assumption. The assumption is taken to keep the expressions simple).  Let $r_{ji}$ denote the distance between the links $i, j$. We consider a standard path-loss model $\|x\|^{-\alpha}, \alpha>2$.

{\noindent \em Channel model:} The small-scale fading (power) between any pair of nodes is modeled by a unit power  Rayleigh distribution and is assumed to be i.i.d across time and space.  The channel gain between the  transmitter of a link $i$ and the receiver in link $j$ is denoted by  $h_{ji}$. Since $h_{ij}$ is Rayleigh  distributed, $|h_{ij}|^2$ is exponentially distributed with unit mean. 

{\noindent \em Interference model:}
 A receiver successfully receives the packet of the corresponding transmitter if the received SIR is above a pre-determined threshold $T$. We consider interference limited networks, where the impact of thermal noise is negligible as compared to interference.  Suppose $\mathcal{M} \subset \mathcal{N}$, be the set of links that are transmitting in the current slot. The SIR of a link $i \in \mathcal{M}$ denoted by $\gamma_{i,\M}$ is given by, 
\begin{align*}
\gamma_{i,\mathcal{M}} =\frac{|h_{ii}|^2R^{-\alpha}}{I(\mathcal{M}\setminus\{i\})}.
\end{align*}
 Here $|h_{ii}|^2R^{-\alpha}$ is the received power at the receiver in link $i$  from its intended transmitter and  \[I(\mathcal{M}\setminus\{i\})=\sum\limits_{j \in \mathcal{M} \setminus \lbrace i \rbrace} |h_{ij}|^2r_{ij}^{-\alpha},\] is the interference power from other concurrent transmissions. 
 
{\noindent \em Queuing Dynamics:} 
 Each link has a separate arrival process and maintains its own buffer. $[a_i]_{i=1}^N$  denote the arrival rates of the links, $[q_i(t)]_{i=1}^N$ denote the queue lengths of the links in time slot $t$.   
 
 {\noindent \em Assumptions on channel state information:} We assume that each link knows the distances to its neighbours (defined later), the path loss exponent and the SIR threshold $T$. 
 
 We now compute the probability that a transmission is successful in the presence of interference. 
\subsection{Probability of successful link}
The probability of success for a link $i \in \mathcal{M}$ denoted by $\mu_i(\M)$ is 
\begin{align*}
\mu_i(\M)
&= \mathbb{P}\left(\gamma_{i,\mathcal{M}} \geq T\right), \\
&= \mathbb{P}\left( |h_{ii}|^2 \geq R^{\alpha}TI(\mathcal{M}\setminus\{i\}) \right), \\
&\stackrel{(a)}{=}\mathbb{E}_{\lbrace h_{ij} \rbrace} \exp\left(-TR^{\alpha}\sum\limits_{j \in \mathcal{M} \setminus \lbrace i \rbrace}|h_{ij}|^2r_{ij}^{-\alpha}\right), \\
&\stackrel{(b)}{=}\prod\limits_{j \in \mathcal{M} \setminus \lbrace i \rbrace} \mathbb{E}_{h_{ij}} \exp\left( -R^{\alpha}T |h_{ij}|^2r_{ij}^{-\alpha} \right), \\
&\stackrel{(c)}{=}\prod\limits_{j \in \mathcal{M} \setminus \lbrace i \rbrace} \frac{1}{1+ {\left( \frac{R}{r_{ij}} \right)}^{\alpha}T},  \hspace{10mm} \forall i \in \mathcal{M},
\end{align*}
where $(a)$ and $(c)$ follow from the exponential distribution of $|h_{ii}|^2$, $|h_{ij}|^2$ and $(b)$ follows from the independence of the fading variables. 
Let  $f(r_{ij}):=\frac{1}{1+ {\left( \frac{R}{r_{ij}} \right)}^{\alpha}T}.$ Then the probability of success can be written as,
\begin{align}
\mu_i(\M)&= \prod\limits_{j \in \mathcal{M} \setminus \lbrace i \rbrace} f(r_{ij}), \quad \forall i \in \mathcal{M}. \label{productform}
\end{align}
For convenience,  the probability of success is set to zero for the links that are not in the currently active set $\mathcal{M}$, \ie, \;
$\mu_i(\M)=0, \; \; \forall i \notin \mathcal{M}$. 
Note that \eqref{productform} is calculated, assuming all the active links in the network can contribute to the interference of a receiver.  However, from the studies on statistical distribution of co-channel interference, the aggregate interference from the links beyond a certain distance can be safely neglected \cite{radius_approx1 , radius_approx2}.  The radius beyond which the interference can be neglected is referred to as  close-in radius, and is denoted by $R_I$.  Hence for a link $i$, the interference from the active links outside a ball of radius $R_I$ around $i$ can be neglected. 
Let $\N_i$ denote the set of links that are potential interferers of link $i$, \ie, the set of links within the ball of radius $R_I$ around link $i$. The links in $\N_i$ are referred to as the \emph{neighbours} of link $i$. Thus, from \eqref{productform} the probability of success is,
\begin{align}
\mu_i(\M)&= \prod\limits_{j \in \M_i   } f(r_{ij}), \quad \forall i \in \mathcal{M}, \label{productform1}
\end{align}
where  $\M_i := \N_i \cap \M$ is set of active links that are within the close-in radius of link $i$. Also note that, if none of the potential interferers of a link $i$ are active, then it succeeds with probability one, \ie, 
\begin{align*}
\mu_i(\M)&=1, \quad  \text{if\; } \M \cap \N_i = \emptyset.
\end{align*}

From \eqref{productform1}, we can observe that the probability of success of a link depends only on the distances from its \emph{active} neighbours $\M_i$.  This  allows for  computation of $\mu_i(\M)$ by  a simple neighbour discovery algorithm \cite{neigh_discovery}. 

We now characterize the capacity region in terms of the link success probabilities $\mu_i(\M)$.
\subsection{Capacity Region}
Every subset of the links $\mathcal{M} \subset \mathcal{N}$,  is associated with a $N$-dimensional vector $\mu(\M)= [\mu_i(\M)]_{i \in \N}$ whose $i$-th element correspond to the probability of success of the link $i$ (when $\mathcal{M}$ is the set of links that are transmitting). $\mu(\M)$ can also be interpreted as the long-term rates that can be supported when the subset $\M$ is active. We refer to these vectors as rate vectors.

The capacity region of the network is the set of all the arrival rate vectors for which there exists a scheduling algorithm that can stabilize the queues. It is known that the capacity region is given by
\begin{align*}
\Lambda&=\lbrace {a \in \mathbb{R}^N_+} \; | \;  \; \exists \epsilon>0,  \; a(1+\epsilon) \in \mathcal{C}o\left(\mu\right) \rbrace, 
\end{align*}
where, $\mathcal{C}o(\mu)$ is the convex hull of $\lbrace \mu(\mathcal{M}) \rbrace_{\mathcal{M} \subset \mathcal{N}}$.

An arrival rate vector $y\in \R^n$ is said to be  feasible if $y  \in \Lambda$. A scheduling algorithm is said to be $\emph{throughput optimal}$, if the algorithm can stabilize the network for any feasible arrival rate. A maximum weight scheduling algorithm is known to be throughput optimal. In each time slot, the algorithm picks the schedule, $\M(t)=\underset{\M \subset \N}{\arg\max}  \sum\limits_{j \in \M} \mu_j\left(\M\right)  q_j(t)$.\\
Some of the notations used so far, are summarized below.
  \begin{center}
  \begin{tabular}{{|r l|}}
    \hline
$\N-$ & \emph{Set of all the links in the network}\\
$\M(t)-$& \emph{Set of links that are active in slot $t$}\\
$\N_i-$& \emph{Set of potential interferers of link $i$.} \\
$\M_i(t)-$& \emph{Set of  active interferers of link $i$ in  slot $t$.}\\
$\mu_i(\M)-$ &\emph{Rate of link $i$ when the set of active} \\
\; & \emph{  links is $\M$.}\\
    \hline
    \end{tabular}
\end{center}

\section{Spatial CSMA} \label{rangeR}
In this Section, our distributed algorithm, \emph{Spatial CSMA} is presented and its throughput optimality is proved. The key idea is to sample subsets (of links) so that sampled subsets provide a good approximation to the Maximum weight algorithm \cite{qcsma,libin}. 
Let $g(x)$ be a real valued function of queue length.  The details of the function $g(x)$ are discussed subsequently. 
\noindent\rule[0.5ex]{\linewidth}{0.5pt}
\emph{Algorithm1}: \textbf{Spatial CSMA} \\
\noindent\rule[0.5ex]{\linewidth}{0.5pt}
\textbf{Intialization:} Each link  $i \in \mathcal{N}$ pre-computes $f_{ij}:=f(r_{ij})$ for all its neighbours $j \in \N_i$ .\\
\textbf{Control slot:}
\begin{itemize}
\item \emph{Decision schedule-} 
A link $i \in \N$, is picked uniformly at random.
\item \emph{Neighbour discovery-} Each link $j \in \{i\} \cup \N_i$  executes a neighbour discovery \cite{neigh_discovery} algorithm to compute  the set of its active interferes in the previous slot, \ie, $\M_j(t-1)$.
\item \emph{Inactive weights-} Each link $j \in \N_i$ computes $\mu_j(\M(t-1) \setminus \lbrace i \rbrace) $ from \eqref{productform1} and subsequently computes the inactive weight 
\begin{align}
w_j^0 :=g\left( q_j(t) \right) \mu_j(\M(t-1) \setminus \lbrace i \rbrace).  \label{weight}
\end{align}

%\item \emph{Message passing-} Each link  $j \in \N_i$, passes its inactive weight $w_j^0$ computed in \eqref{weight} to link $i$.
\item \emph{Active weights-} Link $i$ obtains the inactive weights from its neighbours and computes the active weights as defined below.
\begin{align*}
w_j^1 &:= w_j^0 f_{ij}, \quad  \forall j \in \N_i ,\\
w_i^1 &:=g\left( q_i(t) \right) \mu_i(\M(t-1) \cup \lbrace i \rbrace ).
\end{align*} 
\item \emph{Update Probability-} Link $i$ computes its update probability $p(t)$ as,
\begin{align}
 p(t)= \frac{\exp\left({w_i^1}\right)}{\exp\left(\sum\limits_{j \in \M_i(t-1)} \left(w_j^0 - w_j^1\right)\right)+ \exp\left({w_i^1}\right)}. \label{update_prob_eq}
\end{align}
 Link $i$ chooses to transmit with probability $p(t)$  and chooses not to transmit with probability $1-p(t)$, \ie,
 \[
\M(t)= \left\{\begin{array}{ll}
 \M(t-1) \cup \lbrace i \rbrace & \text{w.p. \; \; }  p(t), \\
 \M(t-1) \setminus \{i \} & \text{w.p.\; \;}  1-p(t).
 \end{array}
 \right. \]
\end{itemize}
\textbf{Data slot:} In the data slot, all the links $j \in \M(t)$ will transmit.
\noindent\rule[0.5ex]{\linewidth}{0.5pt}

In \emph{Algorithm1}, each time slot is divided into a control slot and a data slot. In the control slot, a link $i$ is chosen at random (the implementation of this step is discussed later), and only  this link is allowed to change its status (on/off) in this time slot. All other links will retain their status of the previous time slot. Link $i$ and its neighbours execute a neighbour discovery algorithm to identify all their active neighbours.   For example,  the  compressed neighbour discovery scheme \cite{neigh_discovery}  is a fast and efficient neighbour discovery algorithm which jointly detects all the active neighbours by allowing them to simultaneously report their identity. 

All the neighbours of the link $i$, use their neighbourhood information $\M_j(t-1)$ computed in the previous step to calculate their rate vectors. Note that, all the links in $\M_j(t-1)$ retain their status except for the possible change of the status for link $i$. Hence, to account for this possible change of the status of link $i$,  we define two sets of weights namely \emph{inactive weights} and \emph{active weights}. The contribution of interference from link $i$ is excluded for computing the inactive weights but included for computing the \emph{active weights}. Link $i$ uses these weights to compute its update probability $p(t)$, and updates its status accordingly. In the data slot, all the active links transmit.

\subsection{Throughput Optimality}

\begin{lemma} \label{lemma_dist}
If the queue lengths are fixed at $q= [q_i]_{i =1}^{N}$, then \emph{Algorithm1} corresponds to a Glauber dynamics Markov chain on the subsets  $\M \subset \N$ with a stationary distribution given by,
\begin{align}
\Pi(\M)&= \frac{1}{Z} \exp\left(\sum\limits_{j \in \M} \mu_j\left(\M\right)  g(q_j)\right), \quad  \forall \M \subset \N ,\label{stat_dist}
\end{align}
where $Z$ is the  normalizing constant.
\end{lemma}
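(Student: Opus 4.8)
The plan is to recognise that Lemma~\ref{lemma_dist} makes two assertions — that \emph{Algorithm1} defines a Glauber (single-site Gibbs) dynamics, and that the displayed $\Pi$ is its stationary law — and to dispatch both by verifying the detailed-balance (reversibility) condition. Because the decision schedule picks one link $i$ uniformly and only $i$ may flip its on/off status while every other link retains its state, the chain moves only between pairs of configurations that differ in a single coordinate. Writing $\mathcal{A}:=\M(t-1)\setminus\{i\}$ for the common background, it therefore suffices to check reversibility between the two states $\mathcal{A}$ and $\mathcal{A}\cup\{i\}$ for each choice of $i$; this single-site structure is exactly what makes the chain a Glauber dynamics, and its stationary distribution is pinned down once detailed balance is confirmed.

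The essential preliminary step is to re-read the algorithm's weights through the product form \eqref{productform1}. For a neighbour $j\in\N_i$ the relation $i\in\N_j$ is symmetric and $f_{ij}=f(r_{ij})$, so $\mu_j(\mathcal{A}\cup\{i\})=f_{ij}\,\mu_j(\mathcal{A})$; hence the inactive and active weights satisfy $w_j^0=g(q_j)\mu_j(\mathcal{A})$ and $w_j^1=w_j^0 f_{ij}=g(q_j)\mu_j(\mathcal{A}\cup\{i\})$. In words, $w_j^0$ is link $j$'s weight when $i$ is off and $w_j^1$ its weight when $i$ is on, while $w_i^1=g(q_i)\mu_i(\mathcal{A}\cup\{i\})$ is $i$'s own weight when active. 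This multiplicative collapse of the SIR success probability is the engine of the whole argument.

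Next I would introduce the ``energy'' $W(\M):=\sum_{j\in\M}\mu_j(\M)\,g(q_j)$, so that $\Pi(\M)=Z^{-1}\exp(W(\M))$, and compute $\Delta:=W(\mathcal{A}\cup\{i\})-W(\mathcal{A})$. Toggling $i$ leaves $\mu_j$ unchanged for every non-neighbour of $i$, so only $i$ itself and its active neighbours $\M_i(t-1)$ contribute, giving
\begin{align*}
\Delta = w_i^1-\sum_{j\in\M_i(t-1)}\bigl(w_j^0-w_j^1\bigr).
\end{align*}
Dividing numerator and denominator of \eqref{update_prob_eq} by $\exp\bigl(\sum_{j\in\M_i(t-1)}(w_j^0-w_j^1)\bigr)$ then rewrites the update probability as $p(t)=e^{\Delta}/(1+e^{\Delta})$, which is precisely the Gibbs conditional $\Pi(\mathcal{A}\cup\{i\})/\bigl(\Pi(\mathcal{A})+\Pi(\mathcal{A}\cup\{i\})\bigr)$.

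Finally, detailed balance $\Pi(\mathcal{A})\tfrac1N p(t)=\Pi(\mathcal{A}\cup\{i\})\tfrac1N\bigl(1-p(t)\bigr)$ reduces to $\Pi(\mathcal{A}\cup\{i\})/\Pi(\mathcal{A})=p(t)/(1-p(t))=e^{\Delta}$, which is immediate from the energy computation; since this holds for every $i$ and every background, $\Pi$ is reversible and hence stationary (uniqueness following from the obvious irreducibility of single-site flips). I expect the only real obstacle to be the bookkeeping in the middle step — correctly matching the algorithm's $w_j^0,w_j^1$ to $\mu_j(\mathcal{A})$ and $\mu_j(\mathcal{A}\cup\{i\})$ via the product form, and verifying that non-neighbours drop out of $\Delta$ — after which the exponentials telescope and detailed balance falls out.
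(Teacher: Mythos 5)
Your proposal is correct and follows essentially the same route as the paper: identify \emph{Algorithm1} as a single-site Gibbs sampler and show that the update probability \eqref{update_prob_eq} equals the conditional distribution of $\Pi$ on the pair $\{\mathcal{A},\mathcal{A}\cup\{i\}\}$, which the paper states as "easily verified" and you verify explicitly via the energy difference $\Delta$ and detailed balance. Your middle step --- using the product form \eqref{productform1} to identify $w_j^0,w_j^1$ with $g(q_j)\mu_j(\mathcal{A})$ and $g(q_j)\mu_j(\mathcal{A}\cup\{i\})$ and checking that non-neighbours cancel --- is exactly the computation the paper omits, and it is carried out correctly.
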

\begin{proof}
Proof can be found in Appendix-\ref{proof_dist}
\end{proof}

If the queue lengths were indeed fixed (say at $q$) as required in Lemma \ref{lemma_dist},  \emph{Algorithm1}  provides a good approximation for  the  maximum weight scheduler \cite{qcsma}. This can be observed from \eqref{stat_dist},  as the stationary distribution $\Pi$ on the set of subsets,  places the largest mass on the set $\mathcal{M}$ that maximizes  $\sum\limits_{j \in \M} \mu_j\left(\M\right)  g(q_j)$,  which is precisely the max-weight scheduler except for $q_j$ being replaced by $g(q_j)$. However, this replacement can be justified if an appropriate function $g(x)$ is chosen \cite{stable_policies}.

  Lemma \ref{lemma_dist} assumes that the queue lengths are fixed. However, the queue lengths are time-varying. Moreover, the time required for the Glauber dynamics to reach steady-state can be very long in general to assume that the queue lengths do not change. However, if appropriate slowly varying functions like $\log(0.1x),  \log \log (x+e)$ are used as $g(x)$, it can be shown \cite{qcsma},\cite{dshah} that \emph{Algorithm1} does approximate the maximum weight scheduler in each time slot with a high probability and is hence throughput optimal.
\begin{lemma}
If $g(x) = \log(0.1x)$ or $\log\log(x+e)$, the proposed  spatial CSMA algorithm is throughput optimal. 
\end{lemma}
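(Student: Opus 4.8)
The plan is to prove positive recurrence of the coupled (queue, schedule) Markov chain by a Foster--Lyapunov drift argument that rests on a separation of time scales between the slowly-varying queues and the comparatively fast-mixing scheduler. Lemma \ref{lemma_dist} supplies the essential ingredient: for frozen queue lengths, \emph{Algorithm1} is a reversible Glauber chain whose stationary measure $\Pi$ in \eqref{stat_dist} is a Gibbs distribution with energy $\sum_{j\in\M}\mu_j(\M)g(q_j)$. As observed after that lemma, $\Pi$ concentrates on the subsets maximizing this energy, which is exactly the max-weight schedule with $q_j$ replaced by the transformed weight $g(q_j)$. Since max-weight scheduling is throughput optimal over the capacity region $\Lambda$, it suffices to show that the realized schedule distribution tracks $\Pi$ closely enough that the time-averaged service vector dominates any feasible arrival vector $a$ by the feasibility margin $\epsilon$.

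Concretely, I would fix a Lyapunov function of the form $F(q)=\sum_{i}G(q_i)$ with $G'=g$, and analyze its expected drift $\E[F(q(t+W))-F(q(t))\mid q(t)]$ over a window of $W=W(q(t))$ slots. The window length is chosen to grow with the queue lengths but strictly slower than the scale on which $g$ changes appreciably; because $g\in\{\log(0.1x),\log\log(x+e)\}$ is extremely slowly varying, one can pick $W$ simultaneously (i) larger than the mixing time of the Glauber chain and (ii) small enough that every $q_i$, and hence every weight $g(q_i)$, changes by a negligible amount over the window. Over such a window the scheduler is effectively running against frozen weights, so by Lemma \ref{lemma_dist} its empirical schedule distribution is within some vanishing total-variation distance $\delta$ of $\Pi$, and the induced per-link service rates are within $o(1)$ of the max-weight rates.

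Plugging this approximation into the drift, the arrival contribution is controlled by $a$, while the service contribution is, up to the $\delta$- and $o(1)$-errors, that of the max-weight scheduler evaluated at the weights $g(q_i)$. Using feasibility, $a(1+\epsilon)\in\mathcal{C}o(\mu)$, the max-weight service strictly exceeds the arrivals, so the dominant term of the drift is negative and of order $\epsilon\sum_i g(q_i)$; for all $q$ outside a bounded set this beats the accumulated approximation errors and the bounded arrival/service fluctuations (the arrivals are bounded and $\mu_i(\M)\in(0,1)$, so these fluctuations are uniformly controlled). Negative drift outside a finite set yields positive recurrence by the Foster--Lyapunov criterion, hence stability for every feasible $a$, which is precisely throughput optimality.

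The main obstacle is making the time-scale separation rigorous, i.e.\ producing a quantitative mixing-time bound for the Glauber chain and showing that the window $W$ can be chosen to satisfy (i) and (ii) simultaneously; this is exactly what forces the very slow growth of $g$ and is the content of the adiabatic-type results of \cite{qcsma,dshah}. A secondary point requiring care, specific to our setting, is that those results were established for conflict-graph CSMA where the state space is the family of independent sets, whereas here the chain lives on all of $2^{\mathcal{N}}$ and the link rates $\mu_i(\M)=\prod_{j\in\M_i}f(r_{ij})$ take the SIR product form; I would therefore re-verify that the mixing and concentration estimates survive this change of state space, using the uniform bounds $0<f(r_{ij})<1$ to keep the energies and their increments bounded.
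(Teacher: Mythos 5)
Your approach matches the paper's: the paper's entire proof is the single line ``Follows from our Lemma \ref{lemma_dist}, and Theorem 1, Proposition 2 in \cite{qcsma}'', and your Foster--Lyapunov drift outline with time-scale separation between the slowly varying weights $g(q_i)$ and the mixing of the Glauber chain is precisely the content of those cited results. Your closing caveat --- that the results of \cite{qcsma}, \cite{dshah} are established for conflict-graph CSMA on independent sets and must be re-verified for a chain living on all of $2^{\mathcal{N}}$ with SIR product-form rates $\mu_i(\M)$ --- is a legitimate point that the paper itself leaves unaddressed.
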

\begin{proof}
Follows from our Lemma \ref{lemma_dist}, and Theorem 1, Proposition 2 in \cite{qcsma}.
\end{proof}

\emph{Remarks:} While the techniques used are standard, the key contribution of this paper is the application of these techniques to design a throughput optimal scheduling algorithm for the SIR model with time-varying channels. 

Although \emph{Algorithm1} is proposed for a Rayleigh fading model, it can be easily extended to other fading models without any additional effort.

In \cite{time_varying}, CSMA  algorithm is considered on a conflict graph with time-varying link capacities. The authors of \cite{time_varying} show that the back-off parameter should have a exponential form of the channel gain. They obtain this by solving a maximum entropy problem. However, as we see from  Lemma \ref{lemma_dist}, the  exponential form follows naturally from the max-weight formulation. 

\begin{comment}
\item Multiple update constraints are prescribed and a decision schedule algorithm is designed. This multiple update rule has better perforamance than single.. include simulations comparing them
\end{comment}

\begin{comment}
\begin{lemma} \label{lemma_optimal}
Spatial CSMA Alogrithm statibilizes the network for any arrival rate vector $a \in \Lambda$.
\end{lemma}
\begin{proof}
Proof of this lemma directly follows from the above lemma and theorem $1$ in \cite{qcsma}. we omit the proof for brevity.
\end{proof}
\end{comment}
\begin{figure*}
\centering
\begin{subfigure}{.48\textwidth}
  \centering
  \begin{tikzpicture}[scale=0.8]
\draw[help lines] (0,0) grid (6,6);
\begin{axis}[%
width=6cm,
height=6cm,
scale only axis,
xmin=0,
xmax=0.35,
xlabel={Arrival rate},
ymin=0,
ymax=320,
ylabel={Average queue length},
legend style={at={(0.05,0.7)},anchor=south west,draw=black,fill=white,legend cell align=left}
]

%\draw[step=1cm, black,thick] (-2,-2) grid (6,6);
%\draw[help lines] (0,0) grid (2,3);

\addplot [
color=black,
solid,
mark=star,
mark options={solid}
]
table[row sep=crcr]{
.05 1.57\\ 
.1 2.9651\\
.15  5.09\\ 
.2 9.28\\ 
.21 10.27\\ 
.22 10.95\\ 
.23 12.43\\
.24  13.79\\ 
.25 17.15\\ 
.26 19.07\\ 
.27 25.3\\
.28 25.5\\ 
.29 35.42\\ 
.3 52.9\\ 
.31 79.1\\ 
.32 103.3\\ 
.33 235.2\\    
};
\addlegendentry{ SIR model};

\addplot [
color=red,
solid,
mark=*,
mark options={solid}
]
table[row sep=crcr]{
.05 4.35\\ 
.1 9.26\\ 
.15  20.85\\ 
.2 76.39\\
.21  170.05\\
.22 241.6\\ 
.23 313.7\\
};
\addlegendentry{Conflict graph};

\end{axis}
\end{tikzpicture}%
\caption{ Throughput Comparison}
\label{fig:arrival_rate}
\end{subfigure}%
\begin{subfigure}{.48\textwidth}
\includegraphics[scale=0.4]{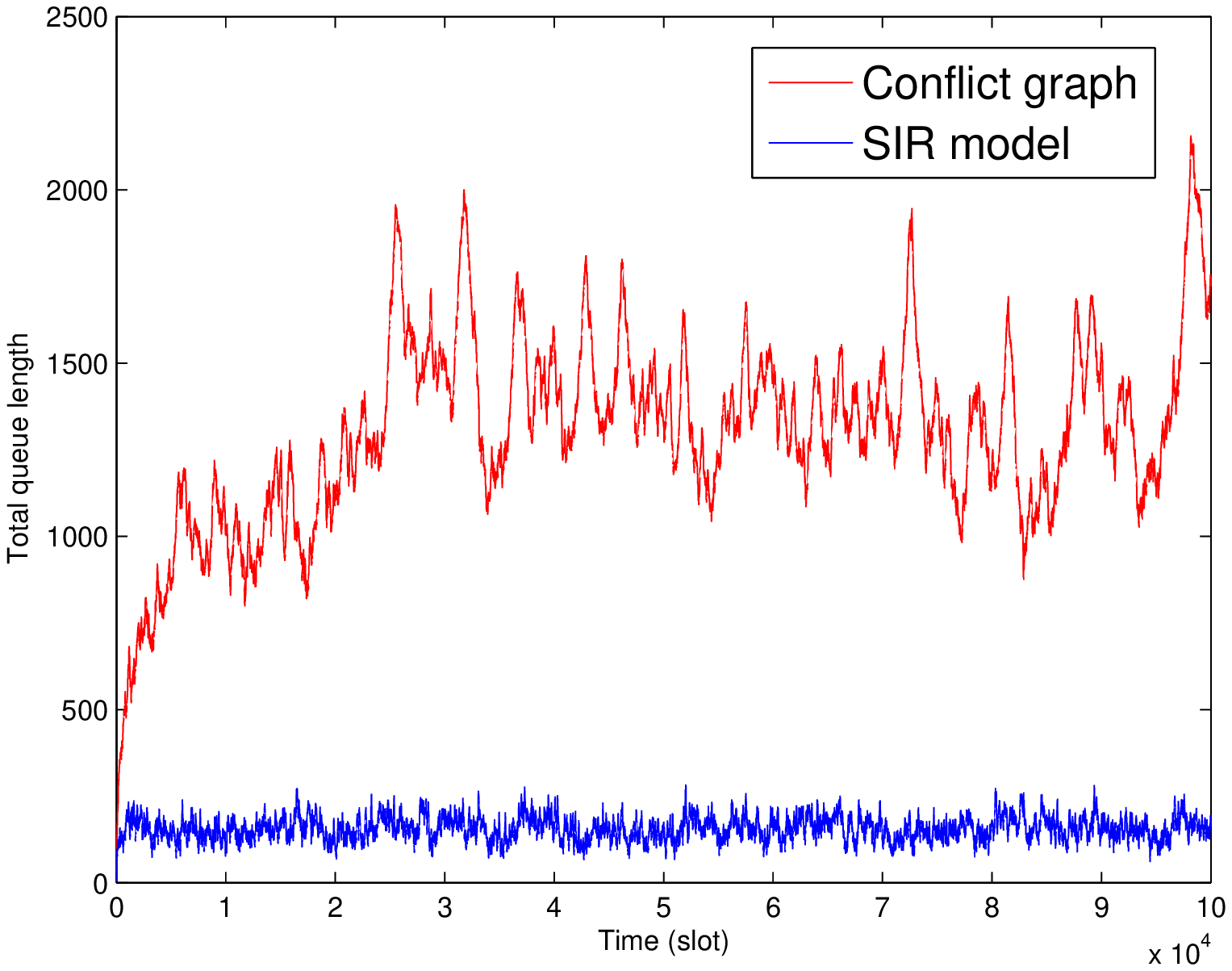} 
\caption{Convergence rate comparison}
\label{q_comparison}
\end{subfigure}
\caption{Numerical Results}
\label{fig:test}
\end{figure*}
\subsection{Parallel Updates}
In the control slot of \emph{Algorithm1}, it is assumed that a link (decision schedule) can be randomly picked at each time slot to update its status. However, the algorithm does not explicitly describe how to implement that step. Moreover, only one link is allowed to update its status in a given time slot. In \emph{Algorithm2}, we relax the limitation of single-update, and provide a distributed algorithm to pick a decision schedule. The limitation of single-update can be relaxed by considering a block (parallel update) Gibbs sampler based algorithm, which allows for parallel updates and also converges faster. However, to ensure a distributed implementation of the (parallel update) Gibbs sampler, the set of links that can do parallel updates has to satisfy the following constraint. (See Lemma \ref{lemma_multiple} for a formal proof.) \\
 \emph{If link $i$ updates its status in a given slot, all the links whose current status information is being used in the computation of the update probability $p(t)$ of link $i$, cannot update in the same slot}. \\
 A set of links, which satisfy the above constraint is referred to as a decision schedule. The formal definition is as follows:

\begin{definition} \textbf{Decision Schedule}\\
 A set of links $\D \subset \N$,  is said to be a decision schedule if,
\begin{align}
\N_i \cup \left(\bigcup\limits_{j \in \N_i } \N_j \setminus \lbrace i \rbrace\right) &\subset \N \setminus \D,  & \forall i \in \D. \label{def_dec}
\end{align}
\end{definition}
The intuition for this definition is as follows. From \eqref{update_prob_eq}, one can observe that the update probability $p(t)$ of a link $i$, depends only on the weights of active links in $\N_i$. Further, the weight of each link $j \in \N_i$, depends on the the status of its neighbours $\N_j$. (See the computation of incative weights in \emph{Algorithm1}.) Hence, the required constraint translates to \eqref{def_dec}.

\emph{Remark:} This constraint is only on the set of links that can do parallel updates in a given slot. However, there are no hard constraints on the set of links than can transmit in a given slot. This is a key difference of this model compared to conflict graph based model.\\
Generating a decision schedule $\D$ can be done in two steps.\\
\emph{Step 1:} Generate a subset of links  $\S$, such that no two links  in $\S$ are within the close-in radius of each other. \\
\emph{Step 2:} Initialize $\D$ to $\S$, and update $\D$ by removing some links from $\D$ as follows. Each link $k \notin \S$ checks if any of its neighbours are in $\S$. If more than one of its neighbours are present in $\S$,  then the neighbours are removed from $\D$.

The first step ensures $\N_i \subset \N \setminus \D$, while the second step ensures $\left(\bigcup\limits_{j \in \N_i } \N_j \setminus \lbrace i \rbrace\right) \subset \N \setminus \D$ so that \eqref{def_dec} is satisfied.

In \cite{qcsma}, a distributed algorithm is suggested for generating the subset $\S$ (\emph{step 1}). We extend that algorithm to generate the subset $\D$. For the sake of completeness, we present the \emph{step 1} from \cite{qcsma}. In \emph{Algorithm2}, the control slot is divided into $(W+2)$ control mini-slots for some $W \geq 2$. (This bound on $W$ is to ensure that each link has a non zero probability of being selected in the decision schedule.) In each time slot, all the links in the network will execute \emph{Algorithm2} independently. 

\noindent\rule[0.5ex]{\linewidth}{0.5pt}
\emph{Algorithm2}: \textbf{Decision Schedule Algorithm (at link $i$)} \\
\noindent\rule[0.5ex]{\linewidth}{0.5pt}
\emph{Step1:}\textbf{ Generating $\S$}
\begin{itemize}
\item Link $i$ selects a random (integer) backoff time $T_i$ uniformly in $\left[0, W-1\right]$ and waits for $T_i$ control mini-slots.
\item If link $i$ hears an INTENT message from a link in $\N_i$ before the $\left(T_i + 1 \right)$th control mini-slot, $i$ will not be included in $\S$ and will not transmit an INTENT message.
\item If link $i$ does not hear an INTENT message from any link in $\N_i$, before the $\left(T_i + 1 \right)$th control mini-slot, it will send (broadcast) an INTENT message to all links in $\N_i$ at the beginning of the $\left(T_i + 1 \right)$th control mini-slot.
\begin{itemize}
\item If there is a collision (\ie, if there is another link in $\N_i$ transmitting an INTENT message in the same mini-slot), link $i$ will not be included in $\S$.
\item If there is no collision, link $i$ will be included in $\S$.
\end{itemize}
\end{itemize}
\emph{Step2: \textbf{Generating $\D$ from $\S$}}
\begin{itemize}
\item If link $i \in \S$, it sends (broadcasts) an INTENT message to all links in $\N_i$ at the beginning of $(W+1)$th control mini-slot.
\item If link $i \notin \S$, it senses the channel for a possible collision (\ie, if more than one of its neighbours are in $\S$, then all of them send INTENT messages which result in a collision) in $(W+1)$th mini-slot. 
\begin{itemize}
\item If there is a collision, link $i$ will broadcast a DETECT message to all its neighbours in $(W+2)$th control mini-slot.
\end{itemize}
\item If link $i \in \S$ and it doesn't hear a DETECT message in $(W+2)$th control mini-slot, it will be included in $\D$.
\end{itemize}
\noindent\rule[0.5ex]{\linewidth}{0.5pt}
\begin{lemma} \label{lemma_multiple}
If all the links in a decision schedule $\D$ selected from \emph{Algorithm2}, simultaneously update their schedules using  \emph{Algorithm1}, then the stationary distribution of the resulting (parallel update) Glauber dynamics is given by \eqref{stat_dist}.
\end{lemma}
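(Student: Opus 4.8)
The plan is to read one round of parallel updates as a single step of a \emph{block} (or batch) Gibbs sampler for the target distribution $\Pi$ of \eqref{stat_dist}, and to show that the decision-schedule constraint \eqref{def_dec} is exactly the condition under which this block step can be carried out by the links independently and in parallel without disturbing $\Pi$. Write $X_i\in\{0,1\}$ for the on/off state of link $i$, so the active set is $\M=\{i:X_i=1\}$, and set $H(\M):=\sum_{j\in\M}\mu_j(\M)g(q_j)$ so that $\Pi(\M)\propto e^{H(\M)}$. First I would recall from the proof of Lemma \ref{lemma_dist} that the single-link rule \eqref{update_prob_eq} is precisely the Gibbs conditional for $\Pi$: the exponent appearing in \eqref{update_prob_eq} is the energy difference $H(\M(t-1)\cup\{i\})-H(\M(t-1)\setminus\{i\})$, so $p(t)=\Pi\big(X_i=1\mid X_{\N\setminus\{i\}}\big)$.

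Next I would pin down the \emph{range of dependence} of this conditional. From \emph{Algorithm1}, $w_i^1=g(q_i)\mu_i(\M(t-1)\cup\{i\})$ reads, through \eqref{productform1}, the states of $\N_i$, while each neighbour weight $w_j^0,w_j^1$ with $j\in\M_i(t-1)\subset\N_i$ reads the states of $\N_j$. Hence $p(t)$ is a function only of the states of links in the two-hop neighbourhood $\N_i\cup\big(\bigcup_{j\in\N_i}\N_j\setminus\{i\}\big)$, which is exactly the set that \eqref{def_dec} forces to lie in $\N\setminus\D$ for every $i\in\D$. Two facts follow. (i) Every updating link $i\in\D$ reads only the states of \emph{frozen} (non-updating) links, so it correctly evaluates $\Pi(X_i=1\mid X_{\N\setminus\D})$ from the time-$(t-1)$ configuration, irrespective of what the other links of $\D$ do in the same slot. (ii) No two links $i,i'\in\D$ can occur together in any single term $\mu_k(\M)g(q_k)$ of $H$, since such a term depends on $X_i$ only when $k=i$ or $k\in\N_i$; a joint dependence on $X_i$ and $X_{i'}$ would therefore force $k\in(\{i\}\cup\N_i)\cap(\{i'\}\cup\N_{i'})$, i.e. $i'\in\N_i$, $i\in\N_{i'}$, or a common neighbour $k\in\N_i\cap\N_{i'}$, all of which are excluded by \eqref{def_dec}. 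Consequently, conditioned on $X_{\N\setminus\D}$, the states $\{X_i:i\in\D\}$ are mutually independent, and $\Pi(X_\D\mid X_{\N\setminus\D})=\prod_{i\in\D}\Pi(X_i\mid X_{\N\setminus\D})$.

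Putting the pieces together, one parallel-update step draws $X_\D$ exactly from the joint conditional $\Pi(\,\cdot\mid X_{\N\setminus\D})$, which is the defining move of a block Gibbs sampler and leaves $\Pi$ invariant by the standard identity $\sum_{x_\D}\Pi(x_\D,x_{\N\setminus\D})\,\Pi(x'_\D\mid x_{\N\setminus\D})=\Pi(x'_\D,x_{\N\setminus\D})$. Because the schedule $\D$ produced by \emph{Algorithm2} is random, the effective kernel is a convex combination $\sum_\D\P(\D)\,K_\D$ of such block kernels $K_\D$; a mixture of kernels each fixing $\Pi$ again fixes $\Pi$. Finally I would verify that single-link flips have positive probability (the $W\ge2$ rule makes every link selectable, and $p(t)\in(0,1)$ strictly), so the chain on $2^{\N}$ is irreducible and aperiodic and therefore has $\Pi$ as its \emph{unique} stationary distribution, namely \eqref{stat_dist}. \textbf{The crux of the argument} is step (ii): matching the two-hop range of dependence of the Gibbs conditional to \eqref{def_dec} and extracting from it the ``no shared Hamiltonian term'' property, hence the factorization of the block. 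Once that conditional independence is in hand, the block-Gibbs invariance and the mixture argument are routine.
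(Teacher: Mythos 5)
Your proof is correct and follows essentially the same route as the paper: both read one round of parallel updates as a block Gibbs step for $\Pi$ and use the decision-schedule constraint \eqref{def_dec} (which places the two-hop neighbourhood of every $i \in \D$ inside the frozen set $\N\setminus\D$) to factor the block conditional $\Pi\left(X_{\D} \mid X_{\N\setminus\D}\right)$ into the single-site conditionals implemented by \emph{Algorithm1}. The only difference is in how the factorization is justified --- the paper observes that $\sigma(t)$ is a Markov random field on the two-hop graph $G$ and invokes the standard parallel-update law from the cited reference, whereas you prove the required conditional independence directly from the decomposition of the Hamiltonian $H$ (your step (ii)) and add the routine mixture-over-$\D$ and irreducibility/uniqueness remarks that the paper leaves implicit.
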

\begin{proof}
Proof can be found in Appendix \ref{proof_multiple}.
\end{proof}

\section{Numerical Results} \label{simulations}
In this Section, we evaluate the performance of the spatial CSMA algorithm. The results are compared to conflict graph based CSMA. This comparison requires the generation of an equivalent conflict graph for a given set of locations (of the links) as described below. 

\emph{Construction of conflict graph:} In a conflict graph based interference model, each link in the network is represented by a vertex in a graph. Two vertices are connected by an edge, if their concurrent transmissions can possibly end up in a collision. Concurrent transmissions from any two links that are with in the close-in radius of each other can be unsuccessful (depending on the channel conditions). Hence, a pair of vertices are connected by an edge if they are in the close-in distance of each other.
\subsection{Simulation settings}
We consider a two dimensional square plane with side length $13$.  A homogeneous Poisson point process of density $0.1$ is generated. The generated points correspond to the locations of the transmitters. Each transmitter has its receiver at a distance of $0.25$ in a random direction. The path loss exponent $\alpha$ is set to $2.5$, the close-in radius $R_I$ is set to $4$, and the threshold SIR is set to $17$ dB. The function $g(x)$ (used in \emph{Algorithm1}) is set to $\log(0.1x)$. In each time slot, the channel gains corresponding to unit power Rayleigh distribution are generated. 
\subsection{Throughput performance }
In Figure \ref{fig:arrival_rate}, we illustrate the throughput performance of \emph{Algorithm1}, by plotting the average queue lengths to see for which arrival rates the system is stable. If the algorithm cannot stabilize the network for a given arrival rate, the queue length blows up.  We consider homogeneous arrival rates for all the links. 
It can be observed that the SIR based algorithm supports a larger set of arrival rates compared to the graph based algorithm. This is because, in a conflict graph model, concurrent transmissions from two neighbouring links are strictly prohibited irrespective of the exact distance between them. However, in SIR model, the links make a better choice by considering the exact distances from its neighbouring links (thereby taking into account the severity of interference) while computing the update probabilities.

\subsection{Convergence rate}
 In Figure \ref{q_comparison}, we compare the convergence rate of the spatial CSMA with the graph model by plotting the total queue evolution as a function of time. We consider a homogeneous arrival rate of $0.2$ which is in the stable region of both these models. It can be observed that the queue reaches steady state much faster in the SIR model as compared to the conflict graph model.

\section{Concluding Remarks}
In this paper, we considered the SIR model with time-varying channels, and proposed a distributed CSMA algorithm. We further proved that the proposed algorithm is throughput optimal. We also proposed a parallel update algorithm with a better convergence rate. Using simulations, we observed that the SIR model supports a larger set of arrival rates, and converges much faster than the conflict graph based model.
\begin{comment}
\begin{itemize}

\item The access probabilities used in the algorithm doesn't have an impact on the stationary distribution of the Markov chain. But it may have any effect on the convergence rate. so how do we choose those access probabilities in a better way?

\item Should I mention, that if we take $\mu_j(1,0,0,0...) =1$ and $ \mu_(1,any othercomb)=-\infty$ and $\mu_(0,any other combination)=0$ gives conflict graph model.

\end{itemize}
\end{comment}

\section{Appendix}
\subsection{Proof of Lemma - \ref{lemma_dist}} \label{proof_dist}
The Glauber dynamics (section 3.3.2 of \cite{mixing_book}) corresponding to the distribution $\Pi$, is a reversible Markov chain with state space $\lbrace \M \;  | \; \M \subset \N \rbrace$, and stationary distribution $\Pi$. The transition probabilities of that Markov chain are described here. From a given state $\M(t-1)$, the chain moves to a new state as follows. A link $i$ is chosen uniformly at random from $\N$ and a new state is chosen according to the measure $\Pi$ conditioned on the set of states in which status (on/off state) of all the links other than link $i$ remain the same as in $\M(t-1).$ In other words, the chain can only move to $\M(t-1) \cup \lbrace i \rbrace$ or $\M(t-1) \setminus \lbrace i \rbrace$ according to the conditional distribution
\begin{align}
\Pi(\M(t) \; | \; \M(t) \in \lbrace \M(t-1) \cup \lbrace i \rbrace , \M(t-1) \setminus \lbrace i \rbrace \rbrace). \label{update_eq}
\end{align}
From \eqref{stat_dist}, it can be easily verified that the update probability $p(t)$ in \emph{Algorithm1} correponds to the conditional distribution in (\ref{update_eq}). Hence, \emph{Algorithm1} corresponds to a Glauber dynamics Markov chain with stationary distribution given by \eqref{stat_dist}.

\subsection{Proof of Lemma - \ref{lemma_multiple}} \label{proof_multiple}
Let us represent a state $\M$ of the Glauber dynamics Markov chain with a $N-$ dimensional binary vector $\sigma$ whose elements are given by $\sigma_i=\i\lbrace i \in \M \rbrace, \; \; {i \in \N }.$ Thus, the equivalent state space of the Markov chain is $\lbrace \sigma\;|\;\sigma \in \lbrace 0,1\rbrace^N \rbrace$. $\sigma(t)$ can be shown as a Random field (chapter 7 in \cite{bremaud}) on $\N$ with $\lbrace\sigma_i(t)\rbrace_{i \in \N}$ being the underlying random variables.  We construct an undirected graph $G(V,E)$ with $V=\N$ and edges given by the following definition of the neighbourhood on $G$.
\begin{align*}
N_G(i):= \left(\bigcup\limits_{k \in \N_i} \N_k \setminus \lbrace i \rbrace\right) \cup \N_i,
\end{align*}
where, $N_G(i)$ is the set of neighbouring vertices of vertex $i$ in the graph $G$. Let us denote, $\sigma(\S)=[\sigma_i]_{i \in \S}$ for $\S \subset \N$. Observe that the conditional distribution, $\sigma_i(t)$ given $\sigma(\N \setminus \lbrace i \rbrace)(t)$ corresponds to the distribution in \eqref{update_eq} (which corresponds to the update probability $p(t)$ as discussed in the proof of lemma \ref{lemma_dist}). Hence, 
\begin{align}
\P(\sigma_i(t)=1 \; | \; \sigma(\N \setminus \lbrace i \rbrace)(t))&= p(t). \label{lhs}
\end{align}
 From \eqref{update_prob_eq}, one can observe that the update probability $p(t)$, depends only on the weights of active links in $\N_i$. Further, the weight of each link $k \in \N_i$ given by \eqref{weight}, depends on the the status of its neighbours $\N_k$.  In other words, the update probability $p(t)$ of link $i$, depends only on the status of the links in $N_G(i)$. Using this observation in \eqref{lhs} gives,
\begin{align}
\P\left(\sigma_i(t) \; | \; \sigma(\N \setminus \lbrace i \rbrace)(t)\right)&= \P\left(\sigma_i(t) \; | \; \sigma(N_G(i))(t)\right). \label{Markov_prop}
\end{align}
In other words, the random variable $\sigma_i(t)$ is independent of all other random variables given the random variables corresponding to its neighbours in $G$, \ie, $\sigma(t)$ is a Markov random field \cite{bremaud} with respect to $G$.

In a parallel update Glauber dynamics corresponding to $\Pi$, if $\D$ is the set of links that are selected for update, then the update rule  is specified by the following joint law \cite{bremaud}, 
\begin{align}
\P(\sigma(\D)(t) \; | \; \sigma(\N \setminus \D)(t))&=\Pi(\sigma(t) \; | \; \sigma(\N \setminus \D)(t)). \label{block_update}
\end{align}
From \eqref{def_dec},  $i \in \D \Rightarrow N_G(i) \subset \N \setminus \D$. Using this property along with \eqref{Markov_prop}, we can write \eqref{block_update} as a product of single-site update probabilities as follows.
\begin{align*}
\P(\sigma(\D)(t) \; | \; \sigma(\N \setminus \D)(t))&=\prod\limits_{i \in \D(t)} \P\left(\sigma_i(t) \; | \; \sigma(N_G(i))(t)\right).
\end{align*}
In other words, we can use the same update rule as in \emph{Algorithm1} for all the links in $\D(t)$ simultaneously and still converge to the required stationary distribution given by \eqref{stat_dist}.

\begin{comment}

\end{comment}

\bibliographystyle{IEEEtran}
\bibliography{myreferences_ncc2015}
% that's all folks
\end{document}